	\definecolor{lightblue}{rgb}{.60,.60,1}
        \definecolor{brown}{rgb}{.5,.5,.5}
\newcommand{\C}{\mathcal{C}}
\newcommand{\N}{\mathbf{N}}
\newcommand{\R}{\mathbf{R}}
\newcommand{\G}{\mathcal{G}}
\renewcommand{\S}{\mathcal{S}}
\newcommand{\Q}{\mathcal{Q}}
\newcommand{\Le}{\mathcal{L}^1}
\newcommand{\lcode}{\left\langle}
\newcommand{\rcode}{\right\rangle}
\newcommand{\pair}[1]{{\lcode#1\rcode}}
\newcommand{\size}[1]{{\left|#1\right|}}
\newcommand{\intersect}{\mathrel{\cap}}
\newcommand{\restr}{\mathrel{\upharpoonright}}
\DeclareMathOperator{\density}{density}
\DeclareMathOperator{\card}{\mathsf{card}}
\DeclareMathOperator{\entropy}{entropy}
\newcommand{\Ha}{\mathcal{H}}
\newcommand{\cHa}{\mathsf{c}\mathcal{H}}
\newcommand{\M}{\mathcal{M}}
\newcommand{\Eleq}{\mathcal{E}_{\leq\alpha}}
\newcommand{\Eeq}{\mathcal{E}_{=\alpha}}
\newcommand{\Elt}{\mathcal{E}_{<\alpha}}
\DeclareMathOperator{\dimh}{\dim_\mathrm{H}}
\newcommand{\dimC}{\dimh\C}
\DeclareMathOperator{\cdim}{cdim}
\DeclareMathOperator{\cdimh}{\cdim_\mathrm{H}}
\DeclareMathOperator{\cdimp}{\cdim_\mathrm{P}}
\DeclareMathOperator{\dimbu}{\overline{\dim}_\mathrm{B}}
\theoremstyle{plain}        \newtheorem{thm}{Theorem}[section]
\theoremstyle{definition}   \newtheorem{defn}[thm]{Definition}
\theoremstyle{plain}        
\theoremstyle{plain}        
\theoremstyle{plain}        \newtheorem{cor}[thm]{Corollary}
\theoremstyle{plain}        \newtheorem{lemma}[thm]{Lemma}
\theoremstyle{remark}       
\theoremstyle{remark}       \newtheorem*{rem}{Remark}
\theoremstyle{plain}        \newtheorem{ques}[thm]{Question}
\theoremstyle{plain}        
\theoremstyle{plain}        \newtheorem*{lorentzlemma}{Lorentz's~Lemma}
\theoremstyle{plain}        \newtheorem*{marstrandlemma}{Marstrand's~Lemma}
\numberwithin{equation}{section}
\begin{document}

\title{Translating the Cantor set by a random}

\author{Randall Dougherty}
\address{CCR--La Jolla, 4320 Westerra Court, San Diego,
  CA 92121}
\email{rdough@ccrwest.org}


\author{Jack Lutz}
\address{Iowa State University, Ames, IA 50011}  
\email{lutz@cs.iastate.edu} 
\thanks{Research supported by NSF Grants 0652569 and 0728806.} 


\author{R. Daniel Mauldin}
\address{University of North Texas, Denton,TX 76203 }
\email{mauldin@unt.edu} 
\thanks{Research supported by NSF Grant DMS-0700831.}


\author{Jason Teutsch}
\address{Ruprecht-Karls-Universit\"at Heidelberg, D-69120 Heidelberg}
\email{teutsch@math.uni-heidelberg.de}
\thanks{Research supported by
Deutsche Forschungsgemeinschaft grant ME 1806/3-1.}

\keywords{algorithmic randomness, fractal geometry, additive number theory}
\subjclass[2000] {Primary 68Q30; Secondary 11K55, 28A78}


\begin{abstract}
We determine the constructive dimension of points in random translates of the Cantor set.  The Cantor set ``cancels randomness'' in the sense that some of its members, when added to Martin-L\"{o}f random reals, identify a point with lower constructive dimension than the random itself.  In particular, we find the Hausdorff dimension of the set of points in a Cantor set translate with a given constructive dimension.
\end{abstract}

\maketitle

\section{Fractals and randoms}
We explore an essential interaction between algorithmic randomness, classical fractal geometry, and additive number theory.  In this paper, we consider the dimension of the intersection of a given set with a translate of another given set.  We shall concern ourselves not only with classical Hausdorff measures and dimension but also the effective analogs of these concepts.

More specifically, let $\C$ denote the standard middle third Cantor set \cite{Fal03, Zie04}, and for each number $\alpha$ let
\begin{equation} \label{def: Eeq}
\Eeq = \{x : \cdimh \{x\} = \alpha \}
\end{equation}
consist of all real numbers with constructive dimension $\alpha$. We answer a question posed to us by Doug Hardin by proving the following theorem: if $1 -\log2/\log3 \leq \alpha \leq 1$ and $r$ is a Martin-L\"{o}f random real, then the Hausdorff dimension of
\begin{equation} \label{pong}
(\C+r) \cap \Eeq
\end{equation}
is $\alpha -(1 -\log 2/\log 3)$.  From this result we obtain a simple relation between the effective and classical Hausdorff dimensions of \eqref{pong}; the difference is exactly $1$ minus the dimension of the Cantor set.  We conclude that many points in the Cantor set additively cancel randomness.

We discuss some of the notions involved in this paper.   Intuitively, a real is ``random'' if it does not inherit any special properties by belonging to an effective null class.  We say a number is \emph{Martin-L\"of random} \cite{DH10, ML66} if it ``passes'' all Martin-L\"of tests.  A \emph{Martin-L\"of test}  is
a uniformly computably enumerable (c.e.) sequence \cite{DH10, Soa87} of open sets
$\{U_m\}_{m\in\N}$ with  $\lambda(U_m) \leq 2^{-m}$, where $\lambda$ denotes
Lebesgue measure \cite{Zie04}. A number $x$ \emph{passes} such a test if $x \not\in
\cap_m U_m$.

The \emph{Kolmogorov complexity} of a string $\sigma$, denoted $K(\sigma)$, is the length (in this paper we will measure length in ternary units) of the shortest program (under a fixed universal machine) which outputs $\sigma$ \cite{LV08}.  For a real number $x$, $x\restr n$ denotes the first $n$ digits in a ternary expansion of $x$.  Martin-L\"{o}f random reals have high initial segment complexity \cite{DH10}; indeed every Martin-L\"{o}f random real $r$ satisfies $\lim_n K(r \restr \nolinebreak n)/n = 1$.  This fact conforms with our intuition that random objects do not compress much.

We introduce a couple of classical dimension notions.  Let $E \subseteq \R^n$.  The \emph{diameter} of $E$, denoted $\size{E}$,  is the maximum distance between any two points in $E$. We will use $\card$ for cardinality.  A \emph{cover} $\G$ for a set $E$ is a collection of sets whose union contains $E$, and $\G$ is a $\delta$-\emph{mesh} cover if the diameter of each member $\G$ is at most $\delta$.  For a number $\beta \geq 0$, the \emph{$\beta$-dimensional Hausdorff measure} of $E$, written $\Ha^\beta(E)$, is given by $\lim_{\delta \to 0} \Ha^\beta_\delta(E)$ where 
\begin{equation} \label{def: Ha}
\Ha^\beta_\delta(E) = \inf\left\{\sum_{G \in \G} \size{G}^\beta : \text{$\G$ is a countable $\delta$-mesh cover of $E$} \right\}.
\end{equation}
The \emph{Hausdorff dimension} of a set $E$, denoted \emph{$\dimh(E)$}, is the unique number $\alpha$ where the $\alpha$-dimensional Hausdorff measure of $E$ transitions from being negligible to being infinitely large; if $\beta < \alpha$, then $\Ha^\beta(E) = \infty$ and if $\beta > \alpha$, then $\Ha^\beta(E) = 0$ \cite{Fal03, Zie04}.  Let $S_\delta(E)$ denote the smallest number of sets of diameter at most $\delta$ which can cover $E$.  The  \emph{upper box-counting dimension} \cite{Fal03} of $E$ is defined as
\begin{equation*}
\dimbu(E) = \limsup_{\delta \to 0} \frac{\log S_\delta(E)}{-\log \delta}.
\end{equation*}

The \emph{effective} (or \emph{constructive}) \emph{$\beta$-dimensional Hausdorff measure} of a 
set $E$, $\cHa^\beta(E_k)$, is defined exactly in the same way as
Hausdorff measure with the restriction that the covers be uniformly
c.e.\ open sets \cite[Definition~13.3.3]{DH10}. This yields the corresponding notion of the
\emph{effective} (or \emph{constructive}) \emph{Hausdorff dimension} of a set $E$,
$\cdimh E$.
Lutz \cite{Lut03} showed
\begin{equation} \label{meatball}
\cdimh E= \sup \{\cdimh \{x\} : x \in E\},
\end{equation}
and from work of Mayordomo~\cite{May02}($\ge$) and Levin~\cite{Lev73}($\leq$) (also see \cite{DH10}) we have for any real number $x$,
\begin{equation} \label{spaghetti}
\cdimh \{x\} = \liminf\limits_{n\to\infty} \frac{K(x \restr n)}{n}.
\end{equation}
Mayordomo and Levin prove their results in $\{0,1\}^\N$, but the results
carry over to the reals.
We define the \emph{constructive dimension} of a point $x$ to be the effective Hausdorff dimension of the singleton $\{x\}$.  A further effective dimension notion, the \emph{effective packing dimension} \cite{KHLM07, DH10} satisfies
\begin{itemize}
\item $\cdimp \{x\} = \limsup\limits_{n\to\infty} \frac{K(x \restr n)}{n}$, and 
\item $\cdimp \C = \dimC$.
\end{itemize}
Let us first note some simple bounds on the complexity of a point in the translated Cantor set $\C+r = \{y: (\exists x\in \C)\: [y = x+r]\}$.
\begin{thm} \label{simp}  Let $x \in \C$ and let $r$ be a
  Martin-L\"{o}f random real. Then
$$
 1- \dimC \leq \cdimh\{x+r\} \leq 1.
$$
\end{thm}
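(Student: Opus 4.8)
The plan is to run everything through the Mayordomo–Levin characterization \eqref{spaghetti}, writing $y = x+r$ and estimating $K(y\restr n)$ from above and below. Throughout set $\beta = \dimC = \log 2/\log 3$, so that the target inequalities read $1-\beta \le \cdimh\{y\} \le 1$.

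The upper bound holds for \emph{every} real and is the easy direction. Since $y\restr n$ is a string of $n$ ternary digits, the universal machine can output it from a program that is essentially the string itself, so $K(y\restr n) \le n + O(1)$. Dividing by $n$ and taking the liminf gives $\cdimh\{y\} = \liminf_n K(y\restr n)/n \le 1$ by \eqref{spaghetti}.

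For the lower bound I would exploit that $r = y - x$ is recoverable, to good precision, from the initial segments of $y$ and $x$. First I would establish the subadditivity-type estimate
\begin{equation*}
K(r\restr n) \le K(y\restr n) + K(x\restr n) + O(\log n).
\end{equation*}
Given $y\restr n$ and $x\restr n$ one knows each of $y$ and $x$ to within $3^{-n}$, hence one can compute $y-x = r$ to within $2\cdot 3^{-n}$; this pins down $r\restr n$ up to a bounded ambiguity in the last digit caused by borrow propagation, which a constant-length correction resolves, and the $O(\log n)$ term absorbs the cost of pairing the two descriptions. Next I would bound $K(x\restr n)$ using $x \in \C$: the word $x\restr n$ lies among the $2^n$ length-$n$ ternary strings with digits in $\{0,2\}$, so specifying it by its index in a canonical enumeration of that set (together with $n$) costs $K(x\restr n) \le \log_3(2^n) + O(\log n) = \beta n + O(\log n)$. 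Combining these two inequalities with the lower bound $K(r\restr n) \ge n - o(n)$ guaranteed by $\lim_n K(r\restr n)/n = 1$ for Martin-L\"of random $r$ yields
\begin{equation*}
n - o(n) \le K(y\restr n) + \beta n + O(\log n),
\end{equation*}
whence $K(y\restr n) \ge (1-\beta)n - o(n)$. Dividing by $n$ and passing to the liminf gives $\cdimh\{y\} \ge 1-\beta = 1-\dimC$ through \eqref{spaghetti}.

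The step I expect to require the most care is the recovery of $r\restr n$ from $y\restr n$ and $x\restr n$: ternary subtraction can propagate borrows, so I must argue that knowing $y$ and $x$ to precision $3^{-n}$ genuinely determines $r\restr n$ up to a bounded correction, and that the non-uniqueness of ternary expansions at finitely-terminating points does not interfere — here it helps that $r$, being random, is irrational and that Cantor points use the canonical $\{0,2\}$ expansion. Everything else is routine bookkeeping; one only needs all error terms to be $o(n)$ so that they vanish after dividing by $n$ and taking the liminf.
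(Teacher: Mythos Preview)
Your proposal is correct and follows essentially the same route as the paper: both use the subadditivity estimate $K(r\restr n) \le K((x+r)\restr n) + K(x\restr n) + O(1)$ together with $\lim_n K(r\restr n)/n = 1$ and a bound on $K(x\restr n)$ coming from membership in $\C$. The only cosmetic difference is that the paper packages the last bound as $\limsup_n K(x\restr n)/n \le \cdimp\{x\} \le \cdimp\C = \dimC$, whereas you obtain it by the equivalent direct count of the $2^n$ admissible length-$n$ prefixes; your worries about borrow propagation and the $O(\log n)$ versus $O(1)$ overhead are harmless since everything is divided by $n$ in the end.
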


\begin{proof} Given $x$ and $r$, there is a constant $c$ such that for
  all $n$, 
 \[
 K[(x+r) \restr n] + K(x\restr n) \geq K(r \restr n) +c.
 \]
Thus,
\begin{align*}
1 \geq \cdimh \{x+r\} &= \liminf\limits_{n\to\infty}\frac{K[(x+r )\restr n]}{n} \\
& \geq \liminf\limits_{n\to\infty}\frac{K(r \restr n)}{n} - \limsup\limits_{n\to\infty} \frac{K(x \restr n)}{n} \\
& \geq 1 - \cdimp\{x\} \geq 1-\cdimp\C =  1-\dimC. \mbox{\qedhere}
\end{align*}
\end{proof}

In Section~\ref{sec: spcr}, we will indicate how some points cancel randomness. We
show that for every $r$ there exists an $x$ such that the constructive dimension of $x+r$ is
as close to the lower bound as one likes. Later we will show that
for each $r$ and number $\alpha$ within the correct bounds, not only
does there exist some $x \in \C$ so that $x+r$ has constructive dimension $\alpha$, but we
will determine the Hausdorff dimension of the set of all $x$'s with constructive dimension $\alpha$.  At this point let us give a heuristic argument indicating what the Hausdorff dimension of this set might be.

Fix a number $1 - \dimC < \alpha < 1$, and  following the notation in \eqref{def: Eeq}, let
\[
\Eleq = \{x : \cdimh\{x\} \leq \alpha \}.
\]
From \cite{Lut03} (see also \cite{CH94}), we know that the effective Hausdorff dimension of $\Eleq$ satisfies $\cdimh\Eleq = \dimh\Eleq = \alpha$.  Since
the upper box counting dimension of $\C$ satisfies
$\dimbu \C = \dimC$  \cite[Example~3.3]{Fal03}, we have $\dimh(\C\times \Eleq) =
\dimC + \alpha$ \cite[Corollary~7.4]{Fal03}.
Define $f:\R^2 \mapsto \R^2$ by
\begin{equation} \label{eq: defn of f}
f(x,y) = (y-x,y).
\end{equation}
Then $f$ is a bi-Lipschitz map and therefore preserves Hausdorff dimension \citep[Lemma~1.8]{Fal86}. So, letting $B = f^{-1}(\C\times \Eleq)$, we have $\dimh B = \dimC + \alpha$.  The \emph{vertical fiber} of $B$ at $x$, or set of points $y$ such that $(x,y) \in B$, is
\begin{equation} \label{eqn: defn of B_x}
B_x = (\C+x)\intersect \Eleq.
\end{equation}
Let $\gamma > \dimC + \alpha$. By the Fubini type inequality for
Hausdorff measures \cite[Theorem~5.12]{Fal86}, there is a positive constant $b$
such that
\[
 0 
 = {\Ha}^\gamma(B) \geq b \int {\Ha}^{\gamma-1}(B_x)\ d{\Ha}^1(x)
 = b\int {\Ha}^{\gamma-1}(B_x)\ dx.
\]
So for Lebesgue measure a.e.\ $x$, $\mathcal{H}^{\gamma -1}[(\C+x)\intersect
\Eleq] = 0$. Therefore, for Lebesgue measure a.e.\ $x$,
\[
\dimh[(\C+x) \intersect \Eleq] \leq \alpha -(1 -\dimC).
\]
We would like to turn this inequality into an equality for every
Martin-L\"{o}f random real $x$, but even showing that inequality holds
for all Martin-L\"of randoms is a problem. This is because, in general, if one has a non-negative Borel measurable function $f$ and $\int f(x)\ dx = 0$, then $f(x) = 0$ for Lebesgue measure almost every $x$, but there may be Martin-L\"{o}f random $x$'s for which $f(x) > 0$. In Section~\ref{sec: upper bound} of this paper, we more carefully analyze our particular situation to obtain the conjectured upper bound.

\section{Some points cancel randomness} \label{sec: spcr}
We begin with a simple example illustrating how points in the Cantor set can counteract randomness.  Let us briefly review some facts about the standard middle-third
Cantor set.  
\begin{enumerate}
\item We may express any $x \in [0,1]$ as a ternary expansion:
\[
x = .x_1x_2x_3\ldots = \sum_{n=1}^\infty \frac{x_n}{3^n}
\]
where each $x_n \in \{0,1,2\}$.
The Cantor set $\C$
consists of those $x$ for which the $x_n$'s are all 0 or 2,
and the half-size Cantor set $\frac{1}{2}\C$
consists of those $x$ for which the $x_n$'s are all 0 or 1.

\item Any number in the interval $[0,2]$ can be written as a sum of two elements of the Cantor set.  Indeed $\frac{1}{2}\C + \frac{1}{2}\C = [0,1]$ because the coordinates of any ternary decimal can be written as $0+0$, $0+1$, or $1+1$.

\item The Hausdorff dimension and effective Hausdorff dimension of the Cantor set agree (see \citep[Theorem~1.4]{Fal86} and \citep[Section~1.7.1]{Rei04}):
\[
\dimC = \cdimh\C = \frac{\log 2}{\log 3} \approx 0.6309.
\]
All the usual notions of dimension: Hausdorff, packing, upper and lower Minkowski or box counting, agree on $\C$ \cite{Fal03}.
\end{enumerate}

Since the Cantor set contains the point 0, it is immediate that $\C+r$ contains points of constructive dimension 1 whenever $r$ is Martin-L\"{o}f random.  We now present a simple construction which identifies points with lower constructive dimension.

\subsection{A point within 2/3 of optimal}
Let $r \in [0,1]$ be a real with ternary expansion $.r_1 r_2 \dotsc$.
Choose $t = .t_1 t_2 \dotsc \in \C$ as follows.  Let
\[
t_n =
\begin{cases}
0 & \text{if $r_n \in \{1,2\}$,} \\
2 & \text{otherwise.}
\end{cases}
\]
Then
\[
r_n + t_n =
\begin{cases}
2 & \text{if $r_n = 0$,} \\
1 & \text{if $r_n = 1$,} \\
2 & \text{if $r_n = 2$,}.
\end{cases}
\]
Since $(0,\frac{1}{3}, \frac{2}{3})$ is the limiting frequency
probability vector for this sequence, the constructive dimension of
this sequence is dominated by the effective Hausdorff dimension of the set of
all sequences with this limiting frequency vector.  By \cite[Lemma~7.3]{Lut03}, we have
\[\cdimh\{r + t\} \leq
\entropy \left(0, \frac{1}{3}, \frac{2}{3} \right) = -\frac{1}{3} \log_3 \frac{1}{3} - \frac{2}{3} \log_3\frac{2}{3} = 1 - \frac{2}{3} \cdot \dimC.
\]

This shows that for every $r$, there exists some point in $\C+r$ whose constructive dimension is at most $1 - (2/3)\dimC$.  Next we construct points whose constructive dimensions approach the $1 -\dimC$ limit given in Theorem~\ref{simp}.

\subsection{Building blocks: achieving near the limit}
We consider a more refined example.  Recall that $\frac{1}{2}\C$ is the set of all ternary decimals in $[0,1]$ made from 0's and 1's (and no 2's), and take $\frac{1}{2}E_3$ to be the set of ternary decimals in $[0,1]$ generated from concatenated blocks in
\[
B_3 = \{000, 002, 021, 110, 112\}.
\]
So in particular $\frac{1}{2}\C$ is generated by concatenating the blocks
\begin{equation} \label{eq: A3}
C_3 = \{000, 001, 010, 011, 100, 101, 110, 111\}.
\end{equation}
By exhaustion, any ternary block of length 3 can be written as the sum of a member of $C_3$ plus a member of $B_3$ (e.g.\ $020 = 002 + 011$).  Therefore $\frac{1}{2}\C + \frac{1}{2}E_3 = [0,1]$, and furthermore, as we shall see in \eqref{eq: dim E_n},
\[
\cdimh E_3 \leq \frac{\log 5}{\log 27} \approx 0.4883.
\]

The following are examples of optimal complementary block sets for each length (in terms of size).  These blocks are not unique: for each length $k$, there is more than one smallest block set which can be added to the length $k$ analogue of \eqref{eq: A3} in order to achieve all ternary numbers up to length $k$.
\begin{align*}
B_1 &= \{0,1\}, \\
B_2 &= \{00, 02, 11\}, \\
B_3 &= \{000, 002, 021, 110, 112\}, \\
B_4 &= \{0000, 0002, 0011, 0200, 0202, 0211, 1100, 1102, 1111\}, \\
\begin{split}
B_5 &= \{00000, 00002, 00021, 00112, 00210, 01221, 02012, \\ &\quad 02110, 02201, 10212, 11010, 11101, 11120, 11122\}.
\end{split}
\end{align*}
Note that $B_4$ is just the product $B_2 \times B_2$ and is still optimal.  We wonder whether products can be optimal for larger indices as well.

 A set $E \subseteq \R$ is called \emph{computably closed} if there
 exists a computable predicate $R$ such that $x \in E \iff (\forall
 n)\: R(x \restr n)$. We shall use the following combinatorial lemma of Lorentz to prove that there exist sufficiently small complementary blocks for each length whose members can be concatenated to achieve computably closed sets with low effective Hausdorff dimension (Theorem~\ref{thm: finite blocks}).
\begin{lorentzlemma}[\citep{Lor54}]
There exists a constant $c$ such that for any integer $k$, if $A
\subseteq [0, k)$ is a set of integers with $\size{A} \geq \ell \geq
2$, then there exists a set of integers $B \subseteq (-k,k)$ such that
$
A + B \supseteq [0, k)
$
with $\size{B} \leq ck\frac{\log \ell}{\ell}$.
\end{lorentzlemma}
Although Lorentz's~Lemma as such does not appear explicitly in Lorentz's original paper, as mentioned in \cite{EKM81}, his argument in \citep[Theorem~1]{Lor54} proves the statement above. 
\begin{thm} \label{thm: finite blocks}
There exists a uniform sequence of computably closed sets $E_1, E_2, \dotsc$ such that
\begin{enumerate}[\scshape (i)]
\item $\frac{1}{2}\C + \frac{1}{2}E_n = [0,1]$ for all $n$, and

\item $\lim_{n \to \infty} \cdimh E_n = 1 - \dimC$.
\end{enumerate}
\end{thm}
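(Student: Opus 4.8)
The plan is to realize each $E_n$ as a self-similar set whose base-$3^n$ super-digits are confined to a cleverly chosen block set, and to manufacture that block set from Lorentz's~Lemma. Working in base $3^n$, identify each length-$n$ ternary block with its integer value, so that $\frac{1}{2}\C$ corresponds to the super-digit set $\tilde A = \{\sum_{i=1}^n a_i 3^{n-i} : a_i \in \{0,1\}\}$, a set of $\ell = 2^n$ integers inside $[0,3^n)$. Applying Lorentz's~Lemma with $k = 3^n$ yields $B \subseteq (-3^n,3^n)$ with $\tilde A + B \supseteq [0,3^n)$ and $\size{B} \le c\,3^n (n\log 2)/2^n$. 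Reducing modulo $3^n$, I set $\tilde B = \{b \bmod 3^n : b \in B\} \subseteq [0,3^n)$ and let $\frac{1}{2}E_n$ be the set of reals whose base-$3^n$ expansion uses only super-digits from $\tilde B$, with $E_n = 2\cdot\tfrac12 E_n$. Since a minimal valid $\tilde B$ can be found by finite search uniformly in $n$, the sequence $(E_n)$ is uniform and each $E_n$ is computably closed.

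For part (i) I would reduce $\frac12\C + \frac12 E_n = [0,1]$ to the sumset covering $\tilde A + \tilde B \supseteq [0,3^n) \pmod{3^n}$, which holds because reducing $\tilde A + B \supseteq [0,3^n)$ modulo $3^n$ meets every residue. The key observation is that since $\tilde A \subseteq [0,(3^n-1)/2]$ and $\tilde B \subseteq [0,3^n)$, any admissible super-digit sum $a+b$ lies in $[0,2\cdot 3^n)$, so inter-block carries never exceed $1$. Thus for any target $z \in [0,1]$ with super-digits $z_i$ and any carry-in in $\{0,1\}$, the covering property supplies $a_i \in \tilde A$ and $b_i \in \tilde B$ realizing $z_i$ with carry-out again in $\{0,1\}$; a routine compactness and approximation argument (both summand sets are compact, so their sumset is closed) upgrades this digit-by-digit scheme to an exact representation, giving $\frac12\C + \frac12 E_n = [0,1]$.

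For part (ii) I would compute $\cdimh E_n$ exactly. The set $\frac12 E_n$ is the attractor of the computable iterated function system $\{x \mapsto (x+b)/3^n : b \in \tilde B\}$ with $m_n = \size{\tilde B}$ maps whose images are disjoint base-$3^n$ cylinders, so by \eqref{meatball} and \eqref{spaghetti} its effective and classical Hausdorff dimensions coincide and equal $\frac{\log m_n}{n\log 3}$: an $x$ whose super-digit sequence is random over $\tilde B$ attains the supremum in \eqref{meatball}, while every $x \in E_n$ satisfies $K(x\restr{jn}) \le j\log_3 m_n + O(\log j)$ and hence gives the matching upper bound. Scaling by $2$ preserves dimension, so $\cdimh E_n = \frac{\log m_n}{n\log 3}$. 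The covering property forces $2^n m_n \ge 3^n$, whence $\cdimh E_n \ge 1 - \dimC$, while the Lorentz bound $m_n \le c\,3^n(n\log 2)/2^n$ gives $\cdimh E_n \le 1 - \dimC + O((\log n)/n)$; letting $n \to \infty$ yields $\lim_n \cdimh E_n = 1 - \dimC$.

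I expect the main obstacle to be part (i): translating the purely additive covering $\tilde A + B \supseteq [0,3^n)$ furnished by Lorentz's~Lemma into the genuine real-addition identity $\frac12\C+\frac12 E_n=[0,1]$. Reducing Lorentz's possibly negative set modulo $3^n$ trades its negative elements for inter-block borrows, and one must verify that the resulting carries stay bounded by $1$ and terminate correctly on infinite ternary expansions. By contrast, once the block set and the two-sided cardinality estimates are in place, the dimension computation in part (ii) is comparatively routine.
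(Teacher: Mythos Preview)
Your reduction of Lorentz's set $B$ modulo $3^n$ is the source of a genuine gap in part~(i). The carry analysis you give is correct as far as it goes---super-digit sums $a+b$ with $a\in\tilde A,\ b\in\tilde B$ do lie in $[0,2\cdot 3^n)$, so carries stay in $\{0,1\}$---but nothing controls the \emph{final} carry out of the most significant block. If that carry is forced to be $1$, your scheme produces $z+1$ rather than $z$, and compactness does not repair this: the sumset is then genuinely missing part of $[0,1]$. A concrete failure already occurs at $n=1$: Lorentz's Lemma with $\tilde A=\{0,1\}$ and $k=3$ is satisfied by $B=\{-1,1\}$, giving $\tilde A+B=\{-1,0,1,2\}\supseteq[0,3)$. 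After reduction, $\tilde B=\{1,2\}$, and $\tfrac12 E_1$ is the set of ternary reals with all digits in $\{1,2\}$, whose minimum is $0.\overline{1}=\tfrac12$. Hence $\tfrac12\C+\tfrac12 E_1\subseteq[\tfrac12,\tfrac32]$ and in particular misses $0$. Your finite search for a ``minimal valid $\tilde B$'' (valid in the mod-$3^n$ sense) could perfectly well return this $\tilde B$.

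The paper sidesteps the entire carry issue by \emph{not} reducing: it keeps $B_k\subseteq(-3^k,3^k)$ and defines $\tfrac12 E_k=\{\sum_n b_n 3^{-kn}:b_n\in B_k\}$, allowing $E_k$ to spill outside $[0,2]$. Then for any $x\in[0,1]$ each super-block of $x$ is written \emph{exactly} as $a_n+b_n$ with $a_n\in C_k$, $b_n\in B_k$ (no mod, no carries), and the two series sum to $x$ on the nose. The price is that the natural IFS for $E_k$ need not satisfy the open set condition (the paper remarks on this right after the proof), so your exact dimension formula $\log m_n/(n\log 3)$ is unavailable; instead the paper gets the lower bound in~(ii) from the product inequality $1=\dimh(\C+E_k)\le\dimbu\C+\dimh E_k$. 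If you want to salvage your route, you would need an additional argument that one can always choose $\tilde B\subseteq[0,3^n)$ of the required size with $\tilde A+\tilde B\supseteq[0,3^n)$ \emph{without} the mod---which is stronger than what either Lorentz's Lemma or your reduction provides.
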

\begin{proof}
For $k>0$, let 
\[
I_k = \{i : 0 \leq i < 3^k\},
\]
and let
\begin{align*}
C_k &= \{i : \text{$0 \leq i < 3^k$ and $i$ has only 0's and 1's in its ternary expansion}\} \\
&= \left\{\sum_{j=0}^{k-1} \delta_j  3^j : \delta_j \in \{0,1\}\right\}.
\end{align*}
By Lorentz's~Lemma (applied to $C_k$), there exists a set $B_k \subseteq (-3^k, 3^k)$ with
\begin{equation} \label{eq: S_k sec2}
I_k \subseteq C_k + B_k
\end{equation}
satisfying
\[
\card{B_k} \leq c' \cdot 3^k \cdot \frac{\log \left(2^k\right)}{2^k} = c'k\log 2 \left(\frac{3}{2}\right)^k = ck \left(\frac{3}{2}\right)^k,
\]
where $c'$ is the constant from Lorentz's Lemma and $c = c'\log 2$.  Set
\[
\frac{1}{2} \C = \left\{ \sum_{n=1}^\infty \frac{a_n}{3^{kn}} : a_n \in C_k \right\} \quad\text{and}\quad
\frac{1}{2} E_k = \left\{ \sum_{n=1}^\infty \frac{b_n}{3^{kn}} : b_n \in B_k \right\}.
\]
Let $x \in [0,1]$ have ternary expansion 
\[
0.x_1 x_2 x_3 \dotsc =
\sum_{n=1}^\infty\sum_{j=1}^k \frac{x_{(n-1)k+j}}{3^{(n-1)k+j}} =
\sum_{n=1}^\infty\sum_{s=0}^{k-1}\frac{x_{nk-s}}{3^{nk-s}} =
\sum_{n=1}^\infty\frac{1}{3^{nk}}\left(\sum_{s=0}^{k-1}x_{nk-s}3^s\right).
\]
  By 
\eqref{eq: S_k sec2}, there exist sequences $\{a_n\}$ with members in
$C_k$ and $\{b_n\}$ from $B_k$ such that for all $n \geq 1$,
\[
\sum_{s=0}^{k-1} x_{nk-s} 3^s = a_n + b_n,
\]
and therefore
\[
x = \sum_{n=1}^\infty \frac{a_n}{3^{kn}} +  \sum_{n=0}^\infty \frac{b_n}{3^{kn}} \in \frac{1}{2}\C + \frac{1}{2} E_k,
\]
which proves part~\textsc{(i)}.

Define
\begin{equation*}
\gamma_k 
= \frac{\log (\card B_k)}{\log 3^k} \
\leq 1 - \frac{\log 2}{\log 3} + \frac{\log c + \log k}{k\log 3}.
\end{equation*}
To prove part~\textsc{(ii)}, we first note that $\cdimh(E_k) \leq
\gamma_k$.  For every $n>0$, we can uniformly cover $E_k$ with $(\card
B_k)^n$ intervals of size $3 \cdot 3^{-kn}$.  Indeed, there are $\card B_k$
choices for each of the first $n$ blocks in any member of $E_k$, and a
closed interval of length $3 \cdot 3^{-kn}$ covers all possible extensions of
each such prefix.  Each $E_k$ is a computably closed set and we have:
\begin{equation}  \label{eq: dim E_n}
\cHa^{\gamma_k}(E_k)
\leq \lim_{n \to \infty} \left(\card B_k\right)^n \cdot 3^{\gamma_k} \cdot (3^{-kn})^{\gamma_k}
= 3^{\gamma_k}.
\end{equation}
So, $\limsup_{k \to \infty} \dimh{E_k} \leq 1 - \dimh\C$.
Also, we have $\gamma_k \ge \dimbu(E_k)$.
Again, applying the fact the Lipschitz map $(x,y) \mapsto x+y$ doesn't increase dimension \citep[Lemma~1.8]{Fal86} together with a bound on the dimension of a product set in terms of the dimension of its factors (\cite[Product~formula~7.3]{Fal03}) we have 
\begin{equation} \label{eq: zxcvb}
1 =\dimh(\C+E_k) \leq  \dimh(\C \times E_k) \leq \dimbu\C+ \dimh E_k = \dimh\C+ \dimh E_k.
\end{equation}
The leftmost equality of \eqref{eq: zxcvb} follows from part~(\textsc{i}) and the rightmost equality follows from \cite[Example~3.3]{Fal03}.  Thus part~\textsc{(ii)} holds.
\end{proof}

From the construction of the set $E_k$ one would think that $\dimh E_k = \gamma_k$ and $0 < \Ha^{\gamma_k}(E_k) < \infty$.  However, it is not clear that the similarity maps that one might naturally use to generate the self-similar set $E_k$ satisfy the open set condition, see \cite{Fal86}.  In fact, there are possible cases (e.g., when $B_k$ contains two consecutive numbers and two numbers that differ by $3^k$) where we would get $\dimh E_k < \gamma_k$.

We obtain immediately from Theorem~\ref{thm: finite blocks} the following:
\begin{cor} \label{cor: finite blocks}
For every real $r \in [0,2]$ and every $\epsilon > 0$, there exists a point in $\C+r$ whose constructive dimension is less than $1 - \dimC + \epsilon$.
\end{cor}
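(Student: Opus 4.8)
The plan is to deduce the corollary directly from Theorem~\ref{thm: finite blocks}, the only extra ingredient being the symmetry of $\C$ about $1/2$. Given $r$ and $\epsilon$, I would first invoke part~\textsc{(ii)} to fix an index $n$ with $\cdimh E_n < 1 - \dimC + \epsilon$. Since $\cdimh E_n = \sup\{\cdimh\{x\} : x \in E_n\}$ by \eqref{meatball}, every point of $E_n$ has constructive dimension at most $\cdimh E_n$; hence it suffices to exhibit a single point of $\C + r$ that lies in $E_n$, or more generally in an integer translate of $E_n$ (which will not change dimensions).

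The key observation is that part~\textsc{(i)}, after scaling by $2$, reads $\C + E_n = [0,2]$, and that $\C$ is symmetric about $1/2$: the digitwise complementation $x \mapsto 1 - x$ sends a ternary expansion with digits in $\{0,2\}$ to another such expansion, so $1 - \C = \C$ and therefore $-\C = \C - 1$. Combining these gives $(E_n + 1) - \C = E_n + (\C - 1) + 1 = E_n + \C = [0,2]$. Thus for the given $r \in [0,2]$ I can write $r = (e+1) - c$ with $e \in E_n$ and $c \in \C$, whence $c + r = e + 1$. By construction the point $c + r$ lies in $\C + r$, and its constructive dimension equals that of $e + 1$.

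It then remains only to control $\cdimh\{e+1\}$. Because translation by an integer preserves constructive dimension (by \eqref{spaghetti}, the ternary expansions of $e$ and $e+1$ share all fractional digits, so their initial-segment complexities agree up to an additive constant), we obtain
\[
\cdimh\{c+r\} = \cdimh\{e+1\} = \cdimh\{e\} \le \cdimh E_n < 1 - \dimC + \epsilon,
\]
which is exactly the desired bound.

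The only content beyond Theorem~\ref{thm: finite blocks} is the geometric step that converts the sumset identity $\C + E_n = [0,2]$ into the assertion that $\C + r$ meets a (dimension-preserving) translate of the low-dimensional set $E_n$; this is precisely where the symmetry of the Cantor set enters, and it is the step I would be most careful about. In particular I would check that the integer shift $+1$ is genuinely needed to cover all of $[0,2]$ rather than merely the range $[-1,1]$ obtained from $E_n - \C$, and that this shift costs nothing in dimension. Everything else is routine.
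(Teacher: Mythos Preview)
Your argument is correct and follows essentially the same route as the paper: pick $n$ with $\cdimh E_n$ small, use $\C+E_n=[0,2]$, and place some $c+r$ inside an affine image of $E_n$. The only cosmetic difference is that you invoke the symmetry $1-\C=\C$ to land in $E_n+1$, whereas the paper simply applies the sumset identity to $2-r\in[0,2]$ and lands in $2-E_n$; both affine images preserve constructive dimension, so the two arguments are interchangeable.
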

\begin{proof}
Let $E_n$ be as in Theorem~\ref{thm: finite blocks} with $n$ large enough so that $\cdimh E_n < 1 - \dimC + \epsilon$,  and let $r \in[0,2]$.  Then $r' = 2-r \in [0,2]$, and there are points $x\in \C$ and $y\in E_n$ such that $x+y = 2-r$.  Thus $x+r \in 2-E_n$; hence 
\[
\cdimh\{x+r\} \leq \cdimh(-E_n+2) = \cdimh E_n < 1 - \dimC + \epsilon
\]
as desired.
\end{proof}
As we shall see in Section~\ref{sec: lb}, we can even achieve a closed set $E$ of effective Hausdorff dimension $1-\dimC$ satisfying $\frac{1}{2}\C + \frac{1}{2}E = [0,1]$.

\section{Lower bound} \label{sec: lb}
In Section~\ref{sec: spcr} we demonstrated the existence of points in the Cantor set which cancel randomness; we now show there are many such points.  Instead of searching for individual points with small dimension, we now characterize the Hausdorff dimension (and effective Hausdorff dimension) of all such points.  We use Lorentz's~Lemma again to upgrade Theorem~\ref{thm: finite blocks} and Corollary~\ref{cor: finite blocks}.  Our upgrade proceeds in two phases.  The second phase occurs later in Section~\ref{sec: lb2} as it relies on the upper bound results from Section~\ref{sec: upper bound}.  Our procedure is the same as that used in \cite{EKM81}.

\begin{defn}
The \emph{density} of a set $A = \{k_1 < k_2 <k_3 < \ldots\} \subseteq \N$ is defined to be
\[
\density(A) = \lim_{n\to \infty}\frac{\card(A \intersect \{1,2, \dotsc, n\})}{n},
\]
provided this limit exists.
\end{defn}
We note that $\density(A) = \lim_{n\to \infty}\frac{n}{k_n}$.  Below $A[n]$ will denote the length $n$ prefix of $A$'s characteristic function and $\lfloor x \rfloor$ is the integer part of $x$.

\begin{thm} \label{lbt}  Let $1 - \dimC \leq \alpha \leq 1$ and let $r \in [0,1]$. Then
\[
\dimh \left[(\C+r) \intersect \Eleq \right] \ge \alpha - 1 + \dimC.
\]
\end{thm}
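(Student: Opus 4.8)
The plan is to reduce the statement to the construction of a single large subset of the translated Cantor set. Since the translation $u \mapsto u + r$ is an isometry of $\R$, we have $\dimh[(\C + r)\intersect\Eleq] = \dimh\{x \in \C : x + r \in \Eleq\}$, so it suffices to build a set $X \subseteq \C$ with $\dimh X \ge \alpha - 1 + \dimC$ such that $\cdimh\{x+r\} \le \alpha$ for every $x \in X$. The key feature to respect is that $r$ is completely arbitrary, so the construction must succeed for every $r$; this rules out any argument exploiting incompressibility of $r$ and forces us to bound the complexity of $x+r$ purely from the combinatorial structure of $X$.

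I would build $X$ by a two-region block construction interpolating between the ``all free'' set (which is $\C$ itself, of dimension $\dimC$) and the ``all cancelling'' set of Theorem~\ref{thm: finite blocks} (of dimension $1-\dimC$). Fix a block length $k$ and split the block indices $\N$ into a \emph{cancelling} set $G$ of density $\theta$ and its \emph{free} complement, where $\theta = (1-\alpha)/\dimC$, so that $\theta \in [0,1]$ across the whole allowed range of $\alpha$; I take $G$ evenly distributed so that all densities below are genuine limits. On a free block I let the length-$k$ ternary block of $x$ range over all of $\{0,2\}^k$, and on a cancelling block I pin the block of $x$ to one fixed choice for which the sum block of $y = x+r$ lands in a prescribed target set $T_k$. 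Here $T_k$ comes from Lorentz's~Lemma applied to the even-digit Cantor blocks $2C_k$: for every possible $r$-block there is a $c \in 2C_k$ with $(r\text{-block}) + c \in T_k$, and $\card T_k \le ck(3/2)^k$. Carries between adjacent blocks are the one genuine nuisance; as in Theorem~\ref{thm: finite blocks} and in \cite{EKM81} they are absorbed by letting $T_k$ cover a slightly enlarged interval, so that a $\pm 1$ carry-in does not spoil the cancellation.

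Two estimates then finish the argument. For the complexity bound, the essential point is that a name for $y\restr Nk$ never needs to reference $r$: it consists of the free blocks written verbatim ($\le k$ digits each) together with, for each cancelling block, an index into the fixed computable set $T_k$ (of cost $\log_3\card T_k = k(1-\dimC) + O(\log k)$), plus $O(N)$ bits recording carries (the computable set $G$ being free). Summing gives $K(y\restr Nk) \le Nk\,[\,1-\theta\dimC\,] + o(Nk)$, whence $\cdimh\{x+r\} = \liminf_n K(y\restr n)/n \le 1-\theta\dimC = \alpha$ for every $x\in X$. For the dimension bound I would place the uniform Moran measure on $X$, giving each length-$Nk$ cylinder mass $2^{-k(1-\theta)N}$ while its diameter is $\asymp 3^{-Nk}$; the mass distribution principle then yields $\dimh X \ge (1-\theta)\dimC = \alpha - 1 + \dimC$.

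The main obstacle I expect is the Hausdorff (rather than box-counting) lower bound on $\dimh X$: because cancelling and free blocks alternate, $X$ is an inhomogeneous Moran set, and one must verify that the Frostman/mass-distribution estimate survives the irregular level structure — this is precisely where choosing $G$ with a genuine density, not merely a liminf, earns its keep. A secondary technical point is the $O(\log k)$ overhead in $\card T_k$ from Lorentz's~Lemma: for fixed $k$ it pushes the per-digit cost on cancelling blocks slightly above $1-\dimC$, so to keep $\cdimh\{x+r\} \le \alpha$ \emph{exactly} I would take $\theta$ a hair larger than $(1-\alpha)/\dimC$, obtain $\dimh X \ge \alpha - 1 + \dimC - O(\log k / k)$, and then let $k\to\infty$ to conclude $\dimh[(\C+r)\intersect\Eleq] \ge \alpha - 1 + \dimC$.
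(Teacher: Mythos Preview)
Your strategy is sound and shares the paper's core idea: split ternary positions into a \emph{cancelling} region of density $\theta=(1-\alpha)/\dimC$ (where Lorentz's~Lemma forces $x+r$ into a sparse target) and a \emph{free} region (where $x$ ranges over all of $\{0,2\}$), so that the free region supplies the Hausdorff dimension and the cancelling region caps the complexity of $x+r$. The paper executes this a bit differently. First, it splits at the digit level rather than at the block level, writing $\C=\C_A+\C_{\bar A}$, and obtains the dimension lower bound via the Lipschitz projection $\C\to\C_{\bar A}$ together with the product formula, rather than via a direct mass-distribution argument on an inhomogeneous Moran set. Second, rather than fixing a block length $k$ and sending $k\to\infty$ at the end to kill the $O((\log k)/k)$ Lorentz overhead, the paper uses a single construction with growing block lengths $n_k=2k-1$ so that the overhead vanishes inside one set $E$. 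Third, and most usefully for the sequel, the paper builds one closed set $E$ with $\cdimh E\le\alpha$ and $\C_A+E=[0,2]$, independent of $r$; the $r$-dependence enters only through the fiber $S=\C\cap(F-r)$. Your set $X$ depends on $r$ from the start, which is fine for this theorem but would not feed as directly into the positive-measure refinement in Section~\ref{sec: lb2}.

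One genuine slip to flag: you write ``the computable set $G$ being free,'' but for general $\alpha$ the density $\theta$ need not be computable, so neither is $G$. The paper meets exactly this issue in Lemma~\ref{lem: K(A) is small}, observing that a set of the form $\{\lfloor y/D\rfloor:y\in\N\}$ has initial-segment complexity $O(\log n)$ (one only needs a rational approximation to $D$ good to $n$ places). The same patch works for your $G$ and is absorbed in your $o(Nk)$ term, so the gap is easily closed; just don't call $G$ computable.
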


\begin{proof}
For $\alpha = 1$, the theorem clearly holds.  Thus assume
\[
D = \frac{1-\alpha}{\dimC} > 0,
\]
and define
\begin{equation} \label{eqn: density(A)}
A = \left\{\left\lfloor y/D \right\rfloor : y \in \N\right\}
\end{equation}
so that $D = \density(A)$.  Let $\C_A$ ($\C_{\bar{A}}$) be the set of $x \in [0,1]$ having a ternary expansion whose digits are all $0$ or~$2$, where the $2$'s only occur at positions in~$A$ (positions not in~$A$).

Now
\[
\dimbu \C_A \leq \limsup_{n \to \infty} \frac{\log 2^n}{\log 3^{k_n -1}} = \dimC \cdot \limsup_{n \to \infty} \frac{n}{k_n-1} = 1 - \alpha.
\]
Since upper box counting dimension dominates Hausdorff dimension \cite[p.~43]{Fal03}, we also have
$\dimh \C_A \leq \dimbu \C_A \leq 1-\alpha$.  As in \eqref{eq: zxcvb}, the Lipschitz map $(x,y) \mapsto x+y$ does not increase dimension \cite[Lemma~1.8]{Fal86}, so $\dimh(\C_A + \C_{\bar{A}}) \leq \dimh(\C_A \times \C_{\bar{A}})$.  Since $\C = \C_A + \C_{\bar{A}}$, it follows from \cite[Product~formula~7.3]{Fal03} that
\begin{equation} \label{eqn: hello kitty}
\dimh \C_{\bar{A}} \geq \dimC - \dimbu \C_A \geq \dimC + \alpha - 1.
\end{equation}

We pause from the main argument to prove the following two lemmas.  First we exploit the special form of the set $A$.
\begin{lemma} \label{lem: K(A) is small}
 $K(A[n]) \leq 4\log_3 n + O(1)$.
\end{lemma}
\begin{proof}[Proof of Lemma~\ref{lem: K(A) is small}]
Let $\frac{r}{s}$ be the largest fraction with $s \leq n$ such that
$\frac{r}{s} \leq \frac{1}{D}$. Notice if we know $r$, $s$, and $n$, we
can compute $A[n]$ because
\begin{equation*}\label{happy}
\left\lfloor \frac{y}{D}\right\rfloor = \left\lfloor \frac{ry}{s}\right\rfloor
\end{equation*}
for $1 \leq y \leq n$.  (To see this, notice that if $x = \lfloor\frac{y}{D}\rfloor >
\frac{ry}{s}$, then $\frac{ry}{s} < x \leq \frac{y}{D}$. This would
give us $\frac{r}{s} < \frac{x}{y} \leq \frac{1}{D}$, contradicting
maximality of $\frac{r}{s}$.)  Specifying $r$, $s$, and $n$ requires a ternary string of length at most
\begin{equation} \label{lala}
\log_3\left(\frac{n}{D}\right) + \log_3 n + \log_3 n + (2\log_3 \log_3 n + 1) +O(1),
\end{equation}
where the ``$2\log_3 \log_3 n + 1$'' bits are used to mark the ends of the ``$\log_3 n$'' bit strings, and $O(1)$ tells the universal machine how to process the input.  The lemma now follows by noting that $4\log_3 n + O(1)$ is an upper bound for \eqref{lala}.
\end{proof}

\begin{lemma} \label{lem: lbt dim E lemma}   
There exists a closed set $E$ such that $\cdimh E \leq \alpha$ and $\C_A + E = [0,2]$.
\end{lemma}
\begin{proof}[Proof of Lemma~\ref{lem: lbt dim E lemma}]
We follow the outline of our prior argument from Section~\ref{sec: spcr}.  The idea is to take $E$ to
be a set generated by concatenating elements from the blocks $B_1,
B_2, B_3, \dotsc$ as in Theorem~\ref{thm: finite blocks}.

For each $k > 0$,  let $m_k = k^2$, let $n_k$ denote the difference $m_k - m_{k-1}$, and let
\[
I_k = \{ i : 0 \leq i < 3^{n_k} \}.
\]
Let $A$ be the set from \eqref{eqn: density(A)}, and define
\begin{multline*}
C_k = \{i \in I_k :\text{$i$ has only 0's and 1's in its ternary expansion} \\
    \text{and the 1's only occur at positions in $A - m_{k-1}$\}}.
\end{multline*}
By Lorentz's~Lemma, there exists a set $B_k \subseteq (-3^{n_k}, 3^{n_k})$ with
\begin{equation} \label{eq: lbleq1}
I_k \subseteq C_k + B_k
\end{equation}
satisfying, for all $\epsilon > 0$ and all sufficiently large $k$,
\begin{equation*}
\card B_k
 \leq c' \cdot 3^{n_k} \cdot \frac{\log \left[2^{n_k (D + \epsilon)} \right]}{ 2^{n_k[D - \epsilon]}}
 = c \cdot 3^{n_k} \cdot \frac{n_k(D+\epsilon)}{2^{n_k (D - \epsilon)}}
\end{equation*}
where $c'$ is the constant obtained from Lorentz's~Lemma, $c = c' \log
2$, and again $D = \density(A)$.  Let
\[
\frac{1}{2}\C_A = \left\{ \sum_{k=1}^\infty  \frac{a_k}{3^{m_k}} : a_k \in C_k \right\} \quad\text{and}\quad 
\frac{1}{2}E = \left\{ \sum_{k=1}^\infty  \frac{b_k}{3^{m_k}} : b_k \in B_k \right\}.
\]
The set $E$ is closed since it is the countable intersection of closed sets.  Let $x \in [0,1]$ with ternary expansion $0.x_1 x_2 x_3 \dotsc$.  By \eqref{eq: lbleq1}, there exist sequences $\{a_k\}$ with members in $C_k$ and $\{b_k\}$ from $B_k$ such that for all $k$,
$
\sum_{j=0}^{n_k-1}  x_{m_k-j} 3^j = a_k + b_k,
$
and therefore
\[
x = \sum_{k=1}^\infty \sum_{j=m_{k-1}+1}^{m_k} \frac{x_j}{3^j} 
=  \sum_{k=1}^\infty \frac{1}{3^{m_k}}\sum_{j=0}^{n_k-1}x_{m_k-j}3^j
= \sum_{k=1}^\infty \frac{a_k}{3^{m_k}} +\sum_{k=1}^\infty \frac{b_k}{3^{m_k}}.
\]
is a member of $\in \frac{1}{2}\C_A + \frac{1}{2} E$.  This proves $\C_A + E = [0,2]$.

It remains to verify that $\cdimh E \leq \alpha$.  Let $\epsilon > 0$,
and let $x \in E$. We want to compute an upper bound on $K(x \restr m_k)$. To specify $x \restr m_k$, we can first specify the
sets $B_j$, for $j \leq k$ and then specify which element of
$B_1\times\dotsb\times B_k$ gives the blocks of $x \restr m_k$.

If we know $A[m_k]$, we can determine the sequence of sets $B_j$, for $j \leq k$
(just use a brute force search to find the first $B_j$ as in the
conclusion of Lorentz's Lemma); by Lemma~\ref{lem: K(A) is small} this
requires a ternary string of length at most $4\log m_k + O(1)$ (plus an additional $o(\log_3
m_k)$ for starting and ending delimiters if desired).
An element of the known set $B_1\times\dotsb\times B_k$ can be specified
by a ternary string of length at most
\begin{multline*} \label{sauce}
\log_3 \prod_{j=1}^k \card{B_j}
\leq \log_3 \prod_{j=1}^k c \cdot 3^{n_j} \cdot \frac{n_j (D+ \epsilon)}{2^{n_j(D - \epsilon)}}\\
= k\log_3 c + m_k + \log_3 \prod_{j=1}^k \frac{n_j (D+ \epsilon)}{2^{n_j(D - \epsilon)}}
 \leq   k\log_3 c + m_k + \log_3 \frac{(D+\epsilon)^k \cdot (m_k)^k}{2^{m_k \cdot (D-\epsilon)}}\\
 \leq k \log_3 c + k^2 + k\log_3(D+\epsilon) + 2k \log_3 k - k^2 (D-\epsilon) \dimC.
\end{multline*}
(and again we can add $O(\log_3 m_k)$ for delimiters).
Therefore,
\begin{align*}
K(x \restr m_k) &\leq k^2[1-(D-\epsilon)\dimC + o(1)] + 4 \log_3 m_k +
O(\log_3 m_k)
\\ &= k^2[\alpha + \epsilon \cdot \dimC + o(1)],
\end{align*}
and appealing to the Kolmogorov complexity definition for constructive dimension \eqref{spaghetti}, we find
\[
\cdim\{x\} 
\leq \liminf_{k \to \infty} \frac{K(x \restr m_k)}{m_k}
\leq \alpha + \epsilon \cdot \dimC.
\]
It follows from \eqref{meatball} that $\cdimh E \leq \alpha + \epsilon$ for every $\epsilon > 0$.
\end{proof}

Take $E$ as in Lemma~\ref{lem: lbt dim E lemma} and let $F = 2-E$.  Then $F \subseteq \Eleq$
and $F - \C_A =  2- (E+\C_A) = [0,2]$.
Fix $r \in [0,1]$ and let $S = \C \cap (F - r)$; it will suffice to show
that $\dimh S \ge \alpha - 1 + \dimC$.

Now for each $z \in \C$ there exist unique points $v \in \C_A$ and $w \in \C_{\bar{A}}$ such that $v+w = z$; let $p$ be the projection map which takes $z \in \C$ to its unique counterpart $w \in \C_{\bar{A}}$.  For each $y \in \C_{\bar{A}}$ we have $r + y \in [0,2] \subseteq F - \C_A$,
so there exists $x \in \C_A$ such that $r+y \in F - x$, which gives
$x+y \in S$ since $\C_A+\C_{\bar{A}}=\C$.    Thus $p$ maps $S$ onto $\C_{\bar{A}}$.  Since $p$ is Lipschitz we have, using \eqref{eqn: hello kitty},
\begin{equation} \label{eq: dim S}
\dimh S \geq \dimh \C_{\bar{A}} \geq \alpha - 1 + \dimC
\end{equation}
because Lipschitz maps do not increase dimension \citep[Lemma~1.8]{Fal86}.  Theorem~\ref{lbt} follows.
\end{proof}

\begin{rem}
The set $E$ constructed in Lemma~\ref{lem: lbt dim E lemma} has both Hausdorff dimension and effective Hausdorff dimension $\alpha$.  Following the method of \eqref{eqn: hello kitty}, we can establish the following lower bound:
\[
\dimh E \geq \dimh [0,2] - \dimbu \C_A \geq 1 + \alpha - 1 = \alpha.
\]
\end{rem}

\section{Upper bound} \label{sec: upper bound}

In this section we prove the following upper bound which matches the lower bound of Theorem~\ref{lbt} and Theorem~\ref{thm: Eeq lb}.

\begin{thm}\label{ubt}
Let $1 -\dimC \le \alpha \le 1$. For every  Martin-L\"{o}f random real $r$,
\begin{equation} \label{ubtneq}
\dimh \left[(\C+r) \cap \Eleq\right] \leq \alpha - 1 + \dimC.
\end{equation}
\end{thm}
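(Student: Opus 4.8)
The plan is to make the Fubini heuristic of Section~\ref{sec: spcr} effective, converting the ``Lebesgue a.e.'' conclusion into a genuine Martin-L\"of test that every random must pass. The naive single-scale covering fails here: if one only records, at a length $n$ where $(\C+r)$ meets a low-complexity cylinder, how many Cantor cylinders $c\restr n\in\{0,2\}^n$ are consistent with a compressible $(c+r)\restr n$, one gets a count of order $\min(3^{\alpha n},2^n)$, which is useless once $\alpha>\dimC$. The improvement by $1-\dimC$ must come from the interplay across scales, and the clean way to extract it is to keep $r$ as a free parameter and defer the translation to a fiber argument. Concretely, I would work with $B=f^{-1}(\C\times\Eleq)$, whose fiber over $x$ is $B_x=(\C+r)\cap\Eleq$ when $x=r$; here the untranslated set $\C$ and the map $f$ of \eqref{eq: defn of f} are computable, while the noncomputable $r$ enters only through the choice of fiber. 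It then suffices to show $\Ha^{\gamma-1}(B_r)=0$ for every Martin-L\"of random $r$ and every rational $\gamma>\dimC+\alpha$, and to let $\gamma\downarrow\dimC+\alpha$.

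The first technical step is to produce, for each such $\gamma$ and each $j$, a uniformly c.e.\ cover $\G^{(j)}$ of $B$ by triadic squares of mesh at most $2^{-j}$ with
\[
\sum_{G\in\G^{(j)}}\size{G}^{\gamma}\le 2^{-j}.
\]
This is an effective version of the product bound used in \eqref{eq: zxcvb}: the Cantor set has explicit computable self-similar covers (using $\dimbu\C=\dimC$), and since $\cdimh\Eleq=\alpha$ the set $\Eleq$ admits uniformly c.e.\ covers of arbitrarily small $s$-cost for every rational $s>\alpha$. Matching the scales of these two families and pushing the resulting product cover through the bi-Lipschitz map $f$ yields the desired c.e.\ cover of $B$. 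Crucially, none of these covers depends on $r$, so they remain lower-semicomputable.

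The second step turns this into a test. For $G$ a square let $\proj_1 G,\proj_2 G$ be its projections, and set
\[
\Phi_j(x)=\sum_{\substack{G\in\G^{(j)}\\ x\in\proj_1 G}}\size{G}^{\gamma-1}.
\]
Then $\Phi_j$ is lower-semicomputable, and by Fubini $\int_\R \Phi_j(x)\,dx=\sum_{G}\size{G}^{\gamma-1}\lambda(\proj_1 G)$, which equals $\sum_G\size{G}^{\gamma}$ up to a dimensional constant and is therefore at most $C2^{-j}$. By Markov's inequality the effectively open sets $U_j=\{x:\Phi_j(x)>2^{-j/2}\}$ satisfy $\lambda(U_j)\le C2^{-j/2}$, so (after reindexing) they form a Martin-L\"of test. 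A random $r$ lies in only finitely many $U_j$, so $\Phi_j(r)\le 2^{-j/2}$ for all large $j$; the family $\{\proj_2 G: G\in\G^{(j)},\ r\in\proj_1 G\}$ is then a cover of $B_r$ of mesh $\le 2^{-j}$ and $(\gamma-1)$-cost at most $2^{-j/2}$, giving $\Ha^{\gamma-1}(B_r)=0$ and hence $\dimh B_r\le\gamma-1$. Passing to a sequence of rationals $\gamma\downarrow\dimC+\alpha$ (the random passes the countably many associated tests) yields $\dimh[(\C+r)\cap\Eleq]\le\alpha-1+\dimC$, as required.

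The main obstacle is the effective product cover: one must match the scales of the computable $\C$-cover and the merely c.e.\ $\Eleq$-cover so as to realize $\sum_G\size{G}^{\gamma}\le 2^{-j}$ \emph{uniformly} in $j$, while keeping $\Phi_j$ lower-semicomputable. This is exactly an effective, uniform form of the classical inequality $\dimh(\C\times\Eleq)\le\dimbu\C+\dimh\Eleq$, and carrying it out carefully is what distinguishes the present argument from the purely measure-theoretic Fubini estimate of Section~\ref{sec: spcr}. Note that this device is precisely what repairs the gap flagged there: a nonnegative integrable fiber-cost that vanishes almost everywhere need not vanish at a random point, but once the fiber-cost is realized as a lower-semicomputable function integrating to a summable bound, its superlevel sets constitute a Martin-L\"of test, and randomness of $r$ forces the cost down for every random rather than merely almost every $x$.
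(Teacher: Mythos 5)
Your proposal is correct and takes essentially the same approach as the paper's proof: the paper likewise reduces to showing $\Ha^{\gamma-1}(B_r)=0$ for the fibers of $B=f^{-1}(\C\times\Eleq)$, builds uniformly c.e.\ product covers of $B$ by matching computable covers of $\C$ at scale $\size{Q}$ to each interval $Q$ of a c.e.\ small-cost cover of $\Eleq$ (obtained from $\cdimh\Eleq=\alpha<d$ via weak $d$-randomness tests), and converts the resulting fiber cost into a Martin-L\"of test that every random $r$ must pass. The only substantive difference is in how the measure of the bad set is bounded: you apply Fubini and Markov's inequality directly to the lower-semicomputable fiber cost $\Phi_j$ with a shrinking threshold $2^{-j/2}$, whereas the paper fixes a threshold $c$ and invokes Marstrand's net-measure lemma with $s=1$ --- which is the same Fubini--Markov estimate in different clothing.
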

\begin{proof}
The case $\alpha=1$ is trivial, so assume $\alpha<1$.
Fix a computable $\gamma > \alpha + \dimC \ge 1$, and let $t = \gamma - 1$. Let $f$ be defined as in \eqref{eq: defn of f} and, as before, let $B = f^{-1}(\C\times \Eleq)$, so that the vertical fiber of $B$ at $x$ is $B_x = (\C+x)\intersect \Eleq$.  To prove Theorem~\ref{ubt} it suffices to prove the following lemma.

\begin{lemma} \label{ubl1}
For every Martin-L\"{o}f random real $r$, $\Ha^t(B_r) = 0$.
\end{lemma}
Let $\M^t$ be the \emph{$t$-dimensional net measure} in the plane induced by the net of standard dyadic squares, and for each $\delta > 0$, let $\M^t_\delta$ be the $\delta$-approximate net measure \cite{Fal86}. $\M^t$ and $\M^t_\delta$ are defined in the same way  as $\Ha^t$ and $\Ha^t_\delta$ except that the covers $\G$ from the definition in \eqref{def: Ha} consist exclusively of square sets of the form
\[
\left[\frac{m}{2^k}, \frac{m+1}{2^k}\right) \times \left[\frac{n}{2^k}, \frac{n+1}{2^k}\right)
\]
for integers $k$, $m$, and $n$.  Hence for any set $E$, $\M^t_\delta(E) \geq \Ha^t_\delta(E)$.  Let $\Le =  \Ha^1$ be Lebesgue measure on the real line.  

To prove Lemma~\ref{ubl1}, it suffices to show the following which is a ``computable'' version of Marstand's method below.  Indeed any $x$ which belongs to the left-hand side of \eqref{eq: mardstrand test} for every $m$ fails a Martin-L\"{o}f test and therefore cannot be Martin-L\"{o}f random.
\begin{lemma} \label{ubl2}
Fix $c> 0$. There is a uniformly computable sequence of open sets $U_m$ with $\Le(U_m) < 2^{-m}$ such that for each $m$,
\begin{equation} \label{eq: mardstrand test}
\{x:\Ha^t(B_x) >c\}\subseteq U_m.
\end{equation}
\end{lemma}

\begin{proof}[Proof of Lemma~\ref{ubl2}]
We may assume that $c$ is computable.
Fix a uniformly computable sequence of collections $\S_k
  = \{S_{k,i}\}$ of dyadic squares
  which for each $k$, forms a $2^{1/2}\cdot\frac{1}{k}$-\emph{mesh cover} of $B$ with
\[
\sum_{S_{k,i} \in \S_k} \size{S_{k,i}}^\gamma < \frac{c \cdot 2^{1/2}}{2^{k+1}}.
\]
We show the existence of such a sequence in Lemma~\ref{lem: S_k squares}. For each $k$, let

\[
A_k = \left\{x: \sum_{i=0}^\infty \left|\left(S_{k,i}\right)_x\right|^t > c\right\}
\]
where $\left(S_{k,i}\right)_x$ denotes the vertical fiber of $S_{k,i}$ at $x$.
The sets $A_k$ are unions of left-closed right-open dyadic intervals
in a uniformly computable way.
Since
\[
\Ha^t(B_x) >c\ \implies\ x \in \bigcap_{m=1}^\infty\bigcup_{k=m}^\infty A_k,
\]
we shall see that it suffices to show $\Le(A_k) < 2^{-k-1}$.

Let $a_{k,i} = \size{S_{k,i}}^t$ and let
\[
I_{k,i} = \{x : (x,y) \in S_{k,i}\text{ for some $y$}\}.
\]
Each $I_{k,i}$ is a dyadic interval and $|I_{k,i}| = 2^{-1/2}
\size{S_{k,i}}$. Also, if $x \in A_k$, then
$
 \sum_{\{i : x\in I_{k,i}\}} a_{k,i} > c. 
$
We now appeal to \citep[Lemma~5.7]{Fal86}:
\begin{marstrandlemma}[\citep{Mar54}]  \label{lem:marstrand}
Let $A \subseteq \R$, let $\{I_n\}$ be a $\delta$-mesh cover of $A$ by dyadic intervals, and let $a_n > 0$ for all $n$.  Suppose that for all $x \in A$
\[
\sum_{\{n:x \in I_n\}} a_n > c
\]
for some constant $c$.  Then for all $s$,
\[
\sum_{n} a_n \size{I_n}^s \geq c \cdot \M_\delta^s(A).
\]
\end{marstrandlemma}
Applying Marstrand's~Lemma with $s= 1$, we have

\[
\frac{c}{2^{k+1}} > 2^{-1/2} \sum_{S_{k,i} \in \S_k} \size{S_{k,i}}^\gamma = \sum_{\{i: x\in I_{k,i}\}} a_{k,i} \size{I_{k,i}} \geq c \cdot \M_{1/k}^1(A_k).
\]

It follows that $\Le(A_k) < 2^{-k-1}$. Thus, for each $m$,
$\Le(\cup_{k=m}^\infty A_k) < 2^{-m}$. Now we may find uniformly
computable open sets $U_m$ with 
\[
\{x:\Ha^t(B_x) >c\} \subseteq \bigcup_{k=m}^\infty A_k \subseteq U_m
\]
and $\Le(U_m) < 2^{-m}$.
\end{proof}
For $s \in [0,1]$, a \emph{weak $s$-randomness test} \cite{Tad02} is a sequence of uniformly c.e.\ sets of open dyadic intervals $U_0, U_1, U_2, \dotsc$ such that $\sum_{\sigma \in U_n} 2^{-s\size{\sigma}} \leq 2^{-n}$ for all $n$.  We will call a set $E \subseteq \R$ \emph{weakly $s$-random} if $E \not\subseteq \bigcap_n U_n$ for every weak $s$-randomness test  $U_0, U_1, U_2, \dotsc$.  We will need the following technical result in order to ensure that the cover $\S_k$ in Lemma~\ref{lem: S_k squares} is sufficiently uniform:
\begin{lemma} \label{lem: S_k squares helper}
For every $d > \alpha$ there is a computable function
  $\pair{j,k,l} \mapsto Q_{\pair{k,l},j}$ such that for all $k$ and $l$,
\begin{enumerate}[\scshape (i)]
\item $\left\{Q_{\pair{k,l},j}\right\}_j$ is a $2^{-k}$-mesh cover of $\Eleq$ by dyadic intervals, and
\item $\sum_{j = 0}^\infty \size{Q_{\pair{k,l},j}}^d < 2^{-l}$.
\end{enumerate}
\end{lemma}

\begin{proof}[Proof of Lemma~\ref{lem: S_k squares helper}]
Without loss of generality, we can assume that $l \geq kd$; proving the result for a larger $l$ only makes the second part of the lemma more true.  An inspection of \cite[Proposition~13.5.3]{DH10} (and the proof of the theorem immediately preceding it) reveals that for any $s \in [0,1]$ and any set of reals $E$,
\[
\cdim E \geq \sup \{s : \text{$E$ is weakly $s$-random} \}.
\]
Since $\cdimh(\Eleq) = \alpha < d$ \cite[Theorem~4.7]{Lut03}, we have that $\Eleq$ is not weakly $d$-random.  This means that there exists a uniformly c.e.\ collection of dyadic intervals $\Q_{k,l} = \{Q_{\pair{k,l},j} : j \geq 0\}$ such that
\begin{equation} \label{yogurt}
\sum_{j=0}^{\infty} \size{Q_{\pair{k,l},j}}^d \leq 2^{-l}
\end{equation}
and each $\Q_{k,l}$ covers $\Eleq$ which proves \textsc{(ii)}.  It follows from \eqref{yogurt} that for every $j_0$, $\size{Q_{\pair{k,l},j_0}}^d \leq 2^{-l}$, and so $$\size{Q_{\pair{k,l},j_0}} \leq 2^{-l/d} \leq 2^{-k}$$ as needed for \textsc{(i)}.
\end{proof}

\begin{lemma} \label{lem: S_k squares}
There exists a uniformly computable sequence of collections of dyadic squares $\S_k$
 which, for each $k$, form a $2^{1/2}\cdot\frac{1}{k}$-mesh cover of $B$ with
\begin{equation*} \label{eq: make 2^-l small}
\sum_{S \in \S_k} \size{S}^\gamma < \frac{c \cdot 2^{1/2}}{2^{k+1}}.
\end{equation*}
\end{lemma}

\begin{proof}[Proof of Lemma~\ref{lem: S_k squares}] 
Write $\gamma =  s+d$, where $s > \dimC$ and $d > \alpha$.
Let $\G_k$ be a uniformly computable mesh cover of $\C$ by dyadic intervals of length $2^{-k}$ such that for each $k$, $\card (\G_k) \leq 2^{sk}$ and, let $\Q_{k,l} = \{Q_{\pair{k,l},j} : j \geq 0\}$ as in Lemma~\ref{lem: S_k squares helper}.  Form the uniformly computable sequence of square covers:
\[
\Gamma_{k,l} = \{G \times Q : \text{$Q \in \Q_{k,l}$ and $G \in \G_{-\log |Q|}$}\}.
\]
Then, using $\card[\G_{-\log |Q|}] \leq \size{Q}^{-s}$ for all $Q \in \Q_{k,l}$,
\begin{align*}
\sum_{X \in \Gamma_{k,l}} \size{X}^\gamma 
= \sum_{Q \in \Q_{k,l} \atop G \in \G_{-\log |Q|}} \size{G \times Q}^\gamma 
&= 2^{\gamma/2}\sum_{Q \in \Q_{k,l}} \size{Q}^{s+d} \cdot \card \left[\G_{-\log |Q|}\right] \\
&\leq 2^{\gamma/2} \sum_{Q \in \Q_{k,l}}\size{Q}^d
< 2^{-l}.
\end{align*}
Let $f$ be the Lipschitz mapping $\eqref{eq: defn of f}$ whose inverse map does not increase diameter by more than a factor of $\sqrt{2}$ and maps $\Gamma_{k,l}$ onto $B$ for all $k$ and $l$.  For every $k$, let $m(k)$ be sufficiently large so that $2^{-m(k)}$ is less than $c / 2^{k+1}$.  Now
form the collection $\S_k$ by taking, for each $X \in \Gamma_{k,m(k)}$, the two dyadic
squares which together cover the sheared dyadic square $f^{-1}(X)$.
Then the $\S_k$'s form a uniformly computable sequence of square covers which achieves the desired bounds.
 \end{proof}
This concludes the proof of Theorem~\ref{ubt}.
\end{proof}

\begin{rem}
In contrast to the lower bound in Theorem \ref{lbt}  which holds for
\emph{all} reals in $[0,1]$, the upper bound in Theorem~\ref{ubt} indeed requires some hypothesis on $r$.  Indeed if $r=0$ and $\dimC < \alpha < 1$ satisfied \eqref{ubtneq}, we would have
\[
\dimC  = \dimh \left[(\C + 0) \intersect \Eleq\right] \leq \alpha - 1 + \dimC < \dimC,
\]
a contradiction.
\end{rem}

\section{Lower bound II} \label{sec: lb2}

We modify the proof of Theorem~\ref{lbt} to obtain a stronger result for the case of Martin-L\"{o}f randoms:
\begin{thm} \label{thm: Eeq lb}
Let $1 - \dimC \leq \alpha \leq 1$ and let $r \in [0,1]$ be
Martin-L\"{o}f random. Then
\[
\dimh \left[(\C+r) \intersect \Eeq\right] = \alpha - 1 + \dimC.
\]
Moreover,
\[
\Ha^{\alpha - 1 + \dimC} \left[(\C+r) \intersect \Eeq \right] > 0.
\]

\end{thm}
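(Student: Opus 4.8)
The plan is to obtain the upper bound for free and to extract the exact‑dimension and positive‑measure statements from the $\Eleq$ results by a measure‑theoretic decomposition. Write $\beta = \alpha - 1 + \dimC$. Since $\Eeq \subseteq \Eleq$, Theorem~\ref{ubt} immediately gives $\dimh[(\C+r)\intersect\Eeq] \le \dimh[(\C+r)\intersect\Eleq] \le \beta$, so the entire difficulty is the matching lower bound together with the positivity of $\Ha^\beta$. I would establish two facts: (a) $\Ha^\beta[(\C+r)\intersect\Eleq] > 0$, a measure‑strengthening of Theorem~\ref{lbt} valid for every $r\in[0,1]$; and (b) $\Ha^\beta[(\C+r)\intersect\Elt] = 0$, which is where the randomness of $r$ enters through Theorem~\ref{ubt}. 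Granting these, note that $\Eleq = \Eeq \cup \Elt$ is a disjoint union and that all three sets are Borel (each is cut out by a condition on $\liminf_n K(x\restr n)/n$), so $\Ha^\beta$ is countably additive on them; by additivity over this disjoint Borel union, $\Ha^\beta[(\C+r)\intersect\Eeq] = \Ha^\beta[(\C+r)\intersect\Eleq] - \Ha^\beta[(\C+r)\intersect\Elt] = \Ha^\beta[(\C+r)\intersect\Eleq] > 0$ (the total being possibly infinite causes no trouble). This simultaneously yields $\dimh[(\C+r)\intersect\Eeq] \ge \beta$ and the ``moreover'' clause.

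For (b), I would cover $\Elt$ by countably many sets $\mathcal E_{\le q}$ with $q\in\Q$. By Theorem~\ref{simp} every point of $\C+r$ has constructive dimension at least $1-\dimC$, so only rationals $q$ with $1-\dimC \le q < \alpha$ contribute; for each such $q$, Theorem~\ref{ubt} applied with $q$ in place of $\alpha$ gives $\dimh[(\C+r)\intersect\mathcal E_{\le q}] \le q - 1 + \dimC < \beta$, whence $\Ha^\beta[(\C+r)\intersect\mathcal E_{\le q}] = 0$. As $(\C+r)\intersect\Elt$ is the union of these countably many $\Ha^\beta$‑null sets, it is itself $\Ha^\beta$‑null. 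This is precisely the step that uses the upper‑bound results of Section~\ref{sec: upper bound} and is what forces the Martin‑L\"of randomness hypothesis on $r$.

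For (a), I would revisit the proof of Theorem~\ref{lbt} and upgrade its Lipschitz surjection $p\colon S \to \C_{\bar{A}}$ from a dimension estimate to a measure estimate; the one new ingredient needed is $\Ha^\beta(\C_{\bar{A}}) > 0$. Since $A = \{\lfloor y/D\rfloor : y\in\N\}$ is a Beatty‑type sequence it has bounded discrepancy, so $\card(\bar{A}\intersect\{1,\dotsc,n\}) = (1-D)n + O(1)$ uniformly in $n$. Placing on $\C_{\bar{A}}$ the natural measure $\nu$ that makes the free ternary digits independent and uniform on $\{0,2\}$, each depth‑$n$ cylinder then has $\nu$‑mass $2^{-(1-D)n + O(1)} = 3^{-\beta n + O(1)}$, using $(1-D)\dimC = \beta$. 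A gap analysis shows that an interval of length $\rho\in[3^{-(n+1)},3^{-n})$ meets only boundedly many such cylinders, so $\nu(I) \le C\rho^\beta$, and the mass distribution principle \cite{Fal03} gives $\Ha^\beta(\C_{\bar{A}}) > 0$. Since $p$ is Lipschitz and onto, the standard inequality $\Ha^\beta(\C_{\bar{A}}) = \Ha^\beta(p(S)) \le (\mathrm{Lip}\,p)^\beta\,\Ha^\beta(S)$ forces $\Ha^\beta(S) > 0$; and because $S + r = (\C+r)\intersect F \subseteq (\C+r)\intersect\Eleq$ while translation preserves $\Ha^\beta$, we conclude $\Ha^\beta[(\C+r)\intersect\Eleq] \ge \Ha^\beta(S) > 0$. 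The degenerate endpoints are immediate: at $\alpha=1$ step (a) reduces to the classical $\Ha^{\dimC}(\C) > 0$, and at $\alpha = 1-\dimC$ one has $\beta = 0$ with $\C_{\bar{A}} = \{0\}$.

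The main obstacle I anticipate is establishing $\Ha^\beta(\C_{\bar{A}}) > 0$ cleanly: one must bound the $\nu$‑measure of \emph{arbitrary} small intervals, not merely of cylinders, and this is exactly where the uniform regularity (bounded discrepancy) of the Beatty sequence $A$ is indispensable—without it the local scaling of $\C_{\bar{A}}$ would fluctuate and the mass distribution principle could fail at the critical exponent. By contrast, the decomposition idea in (b) is the conceptual crux separating the exact‑dimension statement from the $\Eleq$ statement, yet once Theorem~\ref{ubt} is available it is routine.
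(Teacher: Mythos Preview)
Your proposal is correct and follows essentially the same architecture as the paper's proof: the paper also splits the problem into $\Ha^{\beta}[(\C+r)\cap\Eleq]>0$ (its Lemma~\ref{ewok}) and $\Ha^{\beta}[(\C+r)\cap\Elt]=0$ (its Lemma~\ref{jedi}), derives the latter from Theorem~\ref{ubt} exactly as you do, and obtains the former from $\Ha^{\beta}(\C_{\bar A})>0$ via the same Lipschitz projection $p$. The only cosmetic difference is that the paper proves $\Ha^{\beta}(\C_{\bar A})>0$ by bounding the upper density of the natural coin-flip measure and invoking Mattila's density theorem rather than stating the mass distribution principle directly, and it phrases the bounded-discrepancy fact as the existence of $t$ with $u_n\le (n+t)/(1-D)$---the same content as your Beatty-sequence observation.
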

\begin{proof}
Fix an  $\alpha$ satisfying $1-\dimC \leq \alpha \leq 1$, and let $A$, $\C_A$, $\C_{\bar{A}},
E, F$, and $S$ be as in the proofs of Theorem~\ref{lbt} and Lemma~\ref{lem: lbt dim E lemma}.  For $x \in \R$, let 
\[
N_\delta(x) = \{y \in \R : \size{x-y} \leq \delta\}.
\]
We shall make use of the following result from Mattila's book:
\begin{thm}[\cite{Mat95}, Theorem~6.9] \label{thm: mattilathm}
Let $\mu$ be a Radon measure on $\R^n$, $E \subseteq \R^n$, $0 < \lambda < \infty$, and $\alpha > 0$.  If 
\[
\limsup_{\delta \to 0} \frac{\mu[N_\delta(x)]}{(2\delta)^{\alpha}}  \leq \lambda
\]
for all $x \in E$, then $\Ha^\alpha(E) \geq \frac{\mu (E)}{2^\alpha \lambda}$.
\end{thm}

\begin{lemma} \label{lem: C_A has positive measure}
$\Ha^{\alpha - 1 + \dimC}(\C_{\bar{A}}) > 0$.
\end{lemma}
\begin{proof}[Proof of Lemma~\ref{lem: C_A has positive measure}]
Since the case $\alpha = 1-\dimC$ is trivial, we assume $\alpha > 1-\dimC$.
Let $\beta  = \alpha-1+\dimC$.  Since $A =  \{\lfloor y/D \rfloor :y \in \N\}$ where 
$D =  \frac{1-\alpha}{\dimC}$, we have $\overline{A}
= \{u_1 < u_2 <u_3 <\ldots\}$, where $\lim_{n\to \infty} \frac{n}{u_n} =
1-D$.
In fact, the careful choice of the set $A$ lets us make a stronger statement
about the numbers $u_n$: there is a fixed number $t$ such that
$u_n \le (n+t)/(1-D)$ for all $n$.  For each finite binary string $\sigma = \sigma_1 \sigma_2 \dotsc \sigma_n$, let $I(\sigma)$ be the closed interval of ternary expansions $x = 0.x_1 x_2 \dotsc$ satisfying
\[
x_p = 
\begin{cases}
\sigma_k & \text{if $p = u_k$ for some $k \leq n$, and} \\
0 & \text{if $p \leq u_n$ and $p \notin \{u_1, \dotsc, u_n\}$}. \\
\end{cases}
\]
Define a probability measure $\mu$ on $[0,1]$ by requiring
\[
\mu[I(\sigma) \intersect \C_{\bar{A}}] = \frac{1}{2^{[\text{length of $\sigma$}]}}.
\]
Since $\mu$ is a bounded Borel measure supported on the compact set $\C_{\bar{A}}$, $\mu$ is a Radon measure; hence Theorem~\ref{thm: mattilathm} applies.  For $\delta > 0$, let $f(\delta)$ be the least index such that $\delta > 3^{-u_{f(\delta)}}$.  
Let
$$
x = .000 \dotsb 00x_{u_1}000 \dotsb 00x_{u_2}000 \dotsb 00x_{u_f(\delta)}000\dotsb \in \frac{1}{2} \C_{\bar{A}},
$$
and let $J = I[x_{u_1}x_{u_2} \dotsb x_{u_{f(\delta)}}]$.  Then $x \in J$ and the length of the closed interval $J$ is $3^{-u_{f(\delta)}} < \delta$. So $J \subseteq N_\delta(x)$.  Now the length of each interval $I[\sigma_1 \dotsb\sigma_{f(\delta)-1}]$ is $3^{-u_{f(\delta)-1}} \geq \delta$, so $N_\delta(x)$ can intersect no more than 4 of these non-overlapping intervals. Therefore $\mu[N_\delta(x)] \leq 4\cdot 2^{-(f(\delta)-1)}$.  Let $c = 1-D$.  Then
$$
\frac{\mu[N_\delta(x)]}{(2\delta)^\beta} 
\leq  \frac{4\cdot 2^{-(f(\delta)-1)}}{(2\cdot 3^{-u_{f(\delta)}})^\beta} 
\leq \frac{8}{2^\beta} \cdot \left(\frac{3^{\beta u_{f(\delta)}}}{2^{f(\delta)}}\right)
\leq \frac{8}{2^\beta} \cdot \left(\frac{3^{\beta \cdot \frac{f(\delta)+t}{c}}}{2^{f(\delta)}}\right).
$$
Thus
$$
\limsup_{\delta \to 0} \frac{\mu[N_\delta(x)]}{(2\delta)^\beta}
\leq \limsup_{\delta \to 0} 8 \cdot \frac{3^{\beta t/c}}{2^\beta} \cdot \left(\frac{3^{\beta / c}}{2} \right)^{f(\delta)}
= 8 \cdot \frac{3^{\beta t/c}}{2^\beta},
$$ 
and hence by Theorem~\ref{thm: mattilathm}, we have $\Ha^\beta\left(\frac{1}{2}\C_{\bar{A}}\right) \geq \frac{3^{-\beta t/c}}{8}$.  It follows that $\Ha^{\alpha-1+\dimC}(\C_{\bar{A}}) = \Ha^\beta(\C_{\bar{A}}) > 0$.
\end{proof}

\begin{lemma} \label{ewok}
$\Ha^{\alpha - 1 + \dimC} \left[(\C+r) \intersect \Eleq \right] > 0$.
\end{lemma}

\begin{proof}[Proof of Lemma~\ref{ewok}]
Let $S = \C \intersect (E-r)$.  By Lemma~\ref{lem: lbt dim E lemma}, $E \subseteq \Eleq$, and so it suffices to show that $\Ha^{\alpha - 1 + \dimC} (S) > 0$.  Retracing the argument of Theorem~\ref{lbt} down to \eqref{eq: dim S}, we get $\dimh S \geq \alpha - 1 + \dimC$.  Furthermore, as we now argue,
\begin{equation} \label{eq: Ha S}
\Ha^{\alpha-1 + \dimC}(S) \geq q_\alpha \cdot \Ha^{\alpha-1+\dimC}(\C_{\bar{A}}) > 0
\end{equation}
for some constant $q_\alpha > 0$.  The strict inequality in \eqref{eq: Ha S} follows from Lemma~\ref{lem: C_A has positive measure}.  For the nonstrict inequality, we again appeal to \cite[Lemma~1.8]{Fal86} and the fact that the projection map from $\C$ to $\C_A$ is Lipschitz.  \cite[Lemma~1.8]{Fal86} states that, up to some constant factor, a Lipschitz map does not decrease $\Ha^\alpha$ measure.  This is slightly stronger than what we used before in Theorem~\ref{lbt}, namely that a Lipschitz map cannot increase dimension.
\end{proof}

Using the assumption that $r$ is Martin-L\"{o}f random, we next obtain the following:
\begin{lemma} \label{jedi}
$\Ha^{\alpha - 1 + \dimC} \left[(\C+r) \intersect \Elt \right] = 0$.
\end{lemma}

\begin{proof}[Proof of Lemma~\ref{jedi}]
The case $\alpha=1-\dimC$ follows from Theorem~\ref{simp}, so assume $\alpha>1-\dimC$.
By Theorem~\ref{ubt}, for any $\gamma$ with $1 - \dimC \leq \gamma < \alpha$, we have
\[
\dimh \left[ (\C + r) \intersect {\mathcal E_{\leq\gamma}} \right] \leq \gamma - 1 + \dimC < \alpha - 1 + \dimC.
\]
Since $\Elt$ is the countable union of sets $E_{\leq\gamma}$ for a sequence of $\gamma$'s
approaching $\alpha$ from below, the theorem follows.
\end{proof}
Combining Lemma~\ref{ewok} with Lemma~\ref{jedi}, we find that
\[
\Ha^{\alpha - 1 + \dimC} \left[(\C+r) \intersect \Eeq \right] = \Ha^{\alpha - 1 + \dimC} \left[(\C+r) \intersect (\Eleq - \Elt) \right] > 0,
\]
whence we conclude the desired theorem.
\end{proof}
Combining Theorem~\ref{thm: Eeq lb} with Theorem~\ref{lbt} and Theorem~\ref{ubt} yields our final result:
\begin{cor} \label{cor:dim=dim}
For any $\alpha$ satisfying $1 - \dimC \leq \alpha \leq 1$, and for any Martin-L\"{o}f random $r \in [0,1]$, we have
\begin{equation*} \label{eqn:C+r and Eleq}
\dimh \left[(\C+r) \intersect \Eeq\right] = \dimh \left[ (\C + r) \intersect \Eleq \right] = \alpha - 1 + \dimC.
\end{equation*}
\end{cor}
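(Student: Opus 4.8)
The plan is to sandwich $\dimh[(\C+r) \intersect \Eleq]$ between the matching lower and upper bounds already established and to read off the $\Eeq$ equality directly from Theorem~\ref{thm: Eeq lb}. The only structural observation needed is the inclusion $\Eeq \subseteq \Eleq$, which by monotonicity of Hausdorff dimension gives
\[
\dimh\left[(\C+r) \intersect \Eeq\right] \leq \dimh\left[(\C+r) \intersect \Eleq\right],
\]
so it will suffice to pin both quantities to the common value $\alpha - 1 + \dimC$.

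First I would invoke Theorem~\ref{thm: Eeq lb}, which (for Martin-L\"of random $r$) asserts precisely that $\dimh[(\C+r) \intersect \Eeq] = \alpha - 1 + \dimC$. This settles the leftmost equality outright and, through the inclusion above, already delivers the lower bound $\dimh[(\C+r) \intersect \Eleq] \geq \alpha - 1 + \dimC$; alternatively that lower bound is Theorem~\ref{lbt}, which holds for \emph{every} $r \in [0,1]$. Next I would apply Theorem~\ref{ubt}, furnishing the matching upper bound $\dimh[(\C+r) \intersect \Eleq] \leq \alpha - 1 + \dimC$ under the randomness hypothesis. Combining the two bounds forces $\dimh[(\C+r) \intersect \Eleq] = \alpha - 1 + \dimC$, which completes the chain of equalities.

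I do not expect a genuine obstacle here, since all of the analytic and combinatorial work has been discharged into Theorems~\ref{lbt}, \ref{ubt}, and~\ref{thm: Eeq lb}. The one point meriting attention is the bookkeeping of hypotheses: the upper bound (Theorem~\ref{ubt}) and the $\Eeq$ equality (Theorem~\ref{thm: Eeq lb}) both truly require $r$ to be Martin-L\"of random, whereas the $\Eleq$ lower bound of Theorem~\ref{lbt} is unconditional. As the Remark following Theorem~\ref{ubt} shows (via the degenerate case $r=0$), the randomness assumption cannot be omitted, so the corollary is sharp and its proof amounts to this straightforward synthesis.
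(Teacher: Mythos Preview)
Your proposal is correct and follows essentially the same approach as the paper, which simply states that the corollary follows by combining Theorems~\ref{lbt}, \ref{ubt}, and~\ref{thm: Eeq lb}. Your added observation about the inclusion $\Eeq \subseteq \Eleq$ and the discussion of which hypotheses are actually needed are accurate elaborations of this synthesis.
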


\bibliographystyle{amsplain}
\bibliography{cantor_set}

\providecommand{\bysame}{\leavevmode\hbox to3em{\hrulefill}\thinspace}
\providecommand{\MR}{\relax\ifhmode\unskip\space\fi MR }
\providecommand{\MRhref}[2]{%
  \href{http://www.ams.org/mathscinet-getitem?mr=#1}{#2}
}
\providecommand{\href}[2]{#2}
\begin{thebibliography}{10}

\bibitem{KHLM07}
Krishna~B. Athreya, John~M. Hitchcock, Jack~H. Lutz, and Elvira Mayordomo,
  \emph{Effective strong dimension in algorithmic information and computational
  complexity}, SIAM J. Comput. \textbf{37} (2007), no.~3, 671--705
  (electronic). \MR{2341912 (2009g:68070)}

\bibitem{CH94}
{Jin-yi} Cai and Juris Hartmanis, \emph{On {H}ausdorff and topological
  dimensions of the {K}olmogorov complexity of the real line}, Journal of
  Computer and System Sciences \textbf{49} (1994), no.~3, 605--619, 30th IEEE
  Conference on Foundations of Computer Science.

\bibitem{DH10}
Rodney~G. Downey and Denis~R. Hirschfeldt, \emph{Algorithmic randomness and
  complexity}, Theory and Applications of Computability, Springer, New York,
  2010. \MR{2732288}

\bibitem{EKM81}
P.~Erd{\H{o}}s, K.~Kunen, and R.~Daniel Mauldin, \emph{Some additive properties
  of sets of real numbers}, Fund. Math. \textbf{113} (1981), no.~3, 187--199.
  \MR{MR641304 (85f:04003)}

\bibitem{Fal86}
K.~J. Falconer, \emph{The geometry of fractal sets}, Cambridge Tracts in
  Mathematics, vol.~85, Cambridge University Press, Cambridge, 1986.
  \MR{MR867284 (88d:28001)}

\bibitem{Fal03}
Kenneth Falconer, \emph{Fractal geometry}, second ed., John Wiley \& Sons Inc.,
  Hoboken, NJ, 2003, Mathematical foundations and applications. \MR{MR2118797
  (2006b:28001)}

\bibitem{Lev73}
L.~A. Levin, \emph{The concept of a random sequence}, Dokl. Akad. Nauk SSSR
  \textbf{212} (1973), 548--550. \MR{0366096 (51 \#2346)}

\bibitem{LV08}
Ming Li and Paul Vit{\'a}nyi, \emph{An introduction to {K}olmogorov complexity
  and its applications}, third ed., Texts in Computer Science, Springer, New
  York, 2008. \MR{2494387 (2010c:68058)}

\bibitem{Lor54}
G.~G. Lorentz, \emph{On a problem of additive number theory}, Proc. Amer. Math.
  Soc. \textbf{5} (1954), 838--841. \MR{MR0063389 (16,113f)}

\bibitem{Lut03}
Jack~H. Lutz, \emph{The dimensions of individual strings and sequences},
  Inform. and Comput. \textbf{187} (2003), no.~1, 49--79. \MR{MR2018734
  (2005k:68099)}

\bibitem{Mar54}
J.~M. Marstrand, \emph{The dimension of {C}artesian product sets}, Proc.
  Cambridge Philos. Soc. \textbf{50} (1954), 198--202. \MR{MR0060571 (15,691g)}

\bibitem{ML66}
Per Martin-L{\"o}f, \emph{The definition of random sequences}, Information and
  Control \textbf{9} (1966), 602--619. \MR{MR0223179 (36 \#6228)}

\bibitem{Mat95}
Pertti Mattila, \emph{Geometry of sets and measures in {E}uclidean spaces},
  Cambridge Studies in Advanced Mathematics, vol.~44, Cambridge University
  Press, Cambridge, 1995, Fractals and rectifiability. \MR{1333890 (96h:28006)}

\bibitem{May02}
Elvira Mayordomo, \emph{A {K}olmogorov complexity characterization of
  constructive {H}ausdorff dimension}, Inform. Process. Lett. \textbf{84}
  (2002), no.~1, 1--3. \MR{1926330 (2003h:68053)}

\bibitem{Rei04}
Jan Reimann, \emph{Computability and fractal dimension}, Ph.D. thesis,
  University of Heidelberg, December 2004.

\bibitem{Soa87}
Robert~I. Soare, \emph{Recursively enumerable sets and degrees}, Perspectives
  in Mathematical Logic, Springer-Verlag, Berlin, 1987, A study of computable
  functions and computably generated sets. \MR{MR882921 (88m:03003)}

\bibitem{Tad02}
Kohtaro Tadaki, \emph{A generalization of {C}haitin's halting probability
  {$\Omega$} and halting self-similar sets}, Hokkaido Math. J. \textbf{31}
  (2002), no.~1, 219--253. \MR{1888278 (2002k:68082)}

\bibitem{Zie04}
William~P. Ziemer, \emph{Modern real analysis}, second ed., 2004.

\end{thebibliography}
\end{document}